\newfont{\bbb}{msbm10 scaled 500}
\newfont{\bb}{msbm10 scaled 1100}
\theoremstyle{plain}
\newtheorem{theorem}{Theorem}
\title{Polar Coding for Fading Channels}
\author{\IEEEauthorblockN{Hongbo~Si, O.~Ozan~Koyluoglu,
and Sriram~Vishwanath}
\IEEEauthorblockA{Laboratory for Informatics, Networks, and Communications\\
Wireless Networking and Communications Group\\
The University of Texas at Austin\\
1 University Station, C0806, Austin, TX 78712\\
Email: \{sihongbo,ozan\}@mail.utexas.edu, sriram@austin.utexas.edu}}
\begin{document}

\maketitle


\begin{abstract}

A polar coding scheme for fading channels is proposed in this paper. More specifically, the focus is Gaussian fading channel with a BPSK modulation technique, where the equivalent channel could be modeled as a binary symmetric channel with varying cross-over probabilities. To deal with variable channel states, a coding scheme of hierarchically utilizing polar codes is proposed. In particular, by observing the polarization of different binary symmetric channels over different fading blocks, each channel use corresponding to a different polarization is modeled as a binary erasure channel such that polar codes could be adopted to encode over blocks. It is shown that the proposed coding scheme, without instantaneous channel state information at the transmitter, achieves the capacity of the corresponding fading binary symmetric
channel, which is constructed from the underlying fading AWGN channel through the modulation scheme.

\end{abstract}


\section{Introduction}

Polar codes are the first family of provably capacity achieving codes for arbitrary symmetric binary-input discrete memoryless channels (B-DMC) with low encoding and decoding complexity \cite{Arikan:Channel08} \cite{Arikan:Error09}. Channel polarization has then been generalized to arbitrary discrete memoryless channels with the same order of construction complexity and error probability behavior \cite{Sasoglu:Polarization09}. Moreover, polar codes are also proved to be optimal for lossy compression with respect to binary symmetric source \cite{Arikan:Source10}\cite{Korada:Source10}, and then further extended to larger source alphabet \cite{Mohammad:Source10}.

Polar codes also contribute significantly to non-discrete input channels. By adopting polar codes as embedded codes at each expanded level, expansion coding scheme \cite{Ozan:Expansion12} achieves the capacity of additive exponential noise channel in high SNR region with low coding complexity. Besides, a polar coding scheme achieving capacity for additive Gaussian noise channel is investigated in \cite{Abbe:AWGN11}, which utilizes the polarization result for multiple access channel \cite{Abbe:MAC12}. It has been shown that the approach of
using a multiple access channel with a large number of binary-input
users has much better complexity attributes than the one of
using a single-user channel with large input cardinality.

In this paper, we investigate the polar coding scheme for binary-input AWGN channel \cite{Arikan:Polar11}\cite{Tse:Polar12} in a fading scheme. By adopting BPSK modulation and demodulation technique, additive Gaussian noise fading channel has been boiled down to a binary symmetric channel (BSC) with finite set of transition probabilities according to the channel quality. The key intuition of the proposed scheme is based on observing the polarization characteristics of different BSCs. By hierarchically using polar codes, where the transmitter encodes over blocks, it can be proved that the designed coding scheme achieves the capacity of converted channel (fading BSC).

The rest of paper is organized as follows. After introducing the preliminary results on polar codes and problem background in Section II and III, respectively, the polar coding scheme for fading channels is stated and illustrated in Section IV. The paper concludes with a discussion section.


\section{Preliminary for Polar Codes}

The construction of polar code is based on the observation of channel polarization. Consider a binary-input discrete memoryless channel $W:\mathcal{X}\to\mathcal{Y}$, where $\mathcal{X}=\{0,1\}$. Define
\begin{equation}F=\left[\begin{array}{cc}
  1 & 0 \\
  1 & 1
\end{array}\right].\nonumber\end{equation}
Let $B_N$ be the bit-reversal operator defined in \cite{Arikan:Channel08}, where $N=2^n$. By applying the transform $G_N=B_NF^{\otimes n}$ ($F^{\otimes n}$ denotes the $n^{\text{th}}$ Kronecker power of $F$) to $u_{1:N}$, consider transmitting the encoded output $x_{1:N}$ through $N$ independent copies of $W$. Then $N$ new binary-input coordinate channels $W_N^{(i)}:\mathcal{X}\to\mathcal{Y}^N\times\mathcal{X}^{i-1}$ are constructed, where for each $i\in\{1,\ldots,N\}$ the transition probability is given by
\begin{equation}
W_N^{(i)}(y_{1:N},u_{1:{i-1}}|u_i)\triangleq \sum_{u_{{i+1}:N}}\frac{1}{2^{N-1}}W^N(y_{1:N}|u_{1:N}G_N).\nonumber
\end{equation}
Then as $N$ tends to infinity, the channels $\{W_N^{(i)}\}$ polarize to either noiseless or pure-noisy, and the fraction of noiseless channels is close to $I(W)$, the symmetric mutual information of channel $W$ \cite{Arikan:Channel08}.

To this end, polar codes can be considered as $G_N$-coset codes with parameter $(N,K,\mathcal{A},u_{\mathcal{A}^c})$, where $u_{\mathcal{A}^c}\in\mathcal{X}^{N-K}$ is frozen vector (can be set to all-zero vector for symmetric channel \cite{Arikan:Channel08}), and the information set $\mathcal{A}$ is chosen as a $K$-element subset of $\{1,\ldots,N\}$ such that the Bhattacharyya parameters satisfies $Z(W_N^{(i)})\leq Z(W_N^{(j)})$ for all $i\in\mathcal{A}$ and $j\in\mathcal{A}^c$.

The decoder in polar coding scheme is successive cancelation (SC) decoder, which gives an estimate $\hat{u}_{1:N}$ of $u_{1:N}$ given knowledge of $\mathcal{A}$, $u_{\mathcal{A}^c}$, and $y_{1:N}$ by computing
\begin{align}
\hat{u}_i \triangleq \left\{
\begin{array}{cl}
  0, & \text{if }i\in\mathcal{A}^c, \\
  d_i(y_{1:N},\hat{u}_{1:{i-1}}), & \text{if }i\in \mathcal{A},
\end{array}
\right.\nonumber
\end{align}
in the order $i$ from $1$ to $N$, where
\begin{align}
d_i(y_{1:N},\hat{u}_{1:{i-1}})\triangleq \left\{
\begin{array}{cl}
  0, & \text{if }\frac{W_N^{(i)}(y_{1:N},\hat{u}_{1:{i-1}}|0)}{W_N^{(i)}(y_{1:N},\hat{u}_{1:{i-1}}|1)}\geq 1, \\
  1,& \text{otherwise.}
\end{array}
\right.\nonumber
\end{align}
It has been proved that by adopting SC decoder, polar codes achieves any rate $R<I(W)$ with decoding error scaling as $O(2^{-N^{\beta}})$, where $\beta<1/2$. Moreover, the encoding and decoding complexity of polar codes are both $O(N\log N)$.

\section{Problem Background}

Fading channels characterize the wireless communication channels, where the channel states are changing over times and only available at the decoders. Fading coefficient typically varies much slower than transmission symbol duration in practice. To this end, a block fading model \cite{Tse:Wireless05} is proposed, whereby the state is assumed to be a constant over coherence time intervals and stationary ergodic across fading blocks.

Consider the AWGN fading channel,
\begin{equation}
Y_{b,i}=H_{b,i}X_{b,i}+Z_{b,i},\quad b=1,\ldots,B,\;i=1,\ldots,N,\label{fun:channel_definition}
\end{equation}
where $Z_{b,i}$ is i.i.d. additive Gaussian noise with variance $E_Z$; $X_{b,i}$ is channel input with power constraint $$\frac{1}{BN}\sum_{b=1}^B\sum_{i=1}^N x_{b,i}^2\leq E_X;$$ $H_{b,i}$ is channel gain random variable; $N$ is blocklength; and $B$ is number of blocks. For this moment, $H_{b,i}$ are assumed to be constant within a block and follow an i.i.d. fading process over blocks. In particular, for the two sates case $\{h_1,h_2\}$ we consider, omitting the indices, the distribution of $H$ is given by $\text{Pr}\{H=h_1\}\triangleq q_1$ and $\text{Pr}\{H=h_2\}\triangleq q_2=1-q_1$.

Using BPSK modulation, any codeword produced by encoder is mapped to signal with element in $\{-\sqrt{E_X},+\sqrt{E_X}\}$. After utilizing a BPSK demodulation at the decoder, the equivalent channel can be formulated as a binary symmetric channel, with transition probability relating to channel states. More specifically, the converted channel is given by
\begin{equation}
\bar{Y}_{b,i}=\bar{X}_{b,i}\oplus \bar{Z}_{b,i},\quad b=1,\ldots,B,\;i=1,\ldots,N,\label{fun:converted_channel}
\end{equation}
where $\bar{X}_{b,i}$ and $\bar{Y}_{b,i}$ are both Bernoulli random variables representing channel input and output correspondingly; $\bar{Z}_{b,i}$ is i.i.d. channel noise, also distributed as Bernoulli random variable, but related to channel state. More precisely, if $H_{b,i}=h_1$, then
\begin{equation}
\text{Pr}\{\bar{Z}_{b,i}=1\}=1-\Phi(h_1\sqrt{\text{SNR}})\triangleq p_1,\label{fun:p_1}
\end{equation}
and if $H_{b,i}=h_2$, then
\begin{equation}
\text{Pr}\{\bar{Z}_{b,i}=1\}=1-\Phi(h_2\sqrt{\text{SNR}})\triangleq p_2,\label{fun:p_2}
\end{equation}
where $\Phi(\cdot)$ is CDF of normal distribution and $\text{SNR}=E_X/E_Z$. In other words, the channel can be modeled as $W_1\triangleq$BSC$(p_1)$ with
probability $q_1$, and as $W_2\triangleq$BSC$(p_2)$ with
probability $q_2$.

The ergodic capacity of the converted channel (fading BSC) is given by \cite{Tse:Wireless05}
\begin{equation}
C_{\text{SI-D}}=q_1[1-H(p_1)]+q_2[1-H(p_2)],\label{fun:fading_capacity}
\end{equation}
where $H(\cdot)$ is the binary entropy function, and SI-D refers to channel state information at the decoder. The capacity achieving input distribution is uniform over $\{0,1\}$. In this paper, we show a polar coding scheme achieving the capacity of converted fading channel with low encoding and decoding complexity, without having instantaneous channel state information at the transmitter (only the statistical knowledge is assumed).

\section{Polar Coding for Fading Channel}
\subsection{Intuition}

In polar coding for a general B-DMC $W$, we have seen the channel can be polarized by transforming a set of independent copies of given channels into a new set of channels whose symmetric capacities tend to 0 or 1 for all but a vanishing fraction of indices. To this end, an information set $\mathcal{A}$ is constructed by picking the indices corresponding to $K$ minimum values of $Z(W_N^{(i)})$, which is equivalent to picking those corresponding to $K$ largest values of $I(W_N^{(i)})$. In this sense, the construction of $\mathcal{A}$ is deterministic. However, as indicated in \cite{Arikan:Channel08}, the indices in $\mathcal{A}$ are not adjacent. For this, we introduce a permutation $\pi:\{1,\ldots,N\}\to\{1,\ldots,N\}$, which reorders all the indices by the value of $I(W_N^{(i)})$ ranging from high to low. Note that the construction of polar codes already implies the fact that for channels of the same type, their permutation mappings are the same.

Another fact about polar codes is that the polarization is uniform \cite{Korada:Thesis09}. Consider polarizing two B-DMCs, for instance BSCs with parameters $p_1$ and $p_2$ respectively, then the information sets, denoted by $\mathcal{A}_1$ and $\mathcal{A}_2$, satisfy
\begin{equation}
\mathcal{A}_1\subseteq \mathcal{A}_2 \;\text{ if }\; p_1\geq p_2. \label{fun:channel_compare}
\end{equation}
In other words, if a particular channel index constructed from the worse channel (BSC with larger transition probability) polarizes to be noiseless, so does that of the better channel (BSC with smaller transition probability).
\begin{figure}[t!]
 \centering
 \includegraphics[width=0.9\columnwidth]{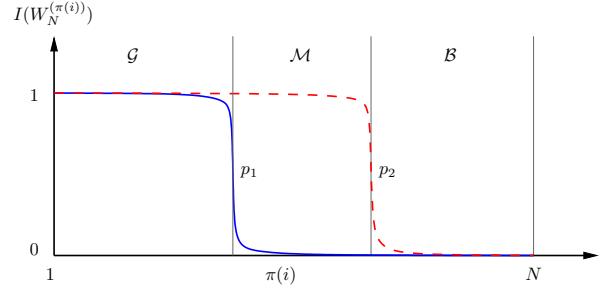}
 \caption{{\bf Illustration of polarizations for two BSCs.} Values of $I(W_N^{(\pi(i))})$, the reordered mutual information, are shown for both polarizations. The blue-solid plot represents the channel with higher transition probability $p_1$, and the red-dashed for $p_2$. Three index categories are denoted by $\mathcal{G}$, $\mathcal{M}$, and $\mathcal{B}$ in order.
}
\label{fig:Polarization}
\end{figure}
\begin{figure*}[t]
 \centering
 \includegraphics[scale=0.7]{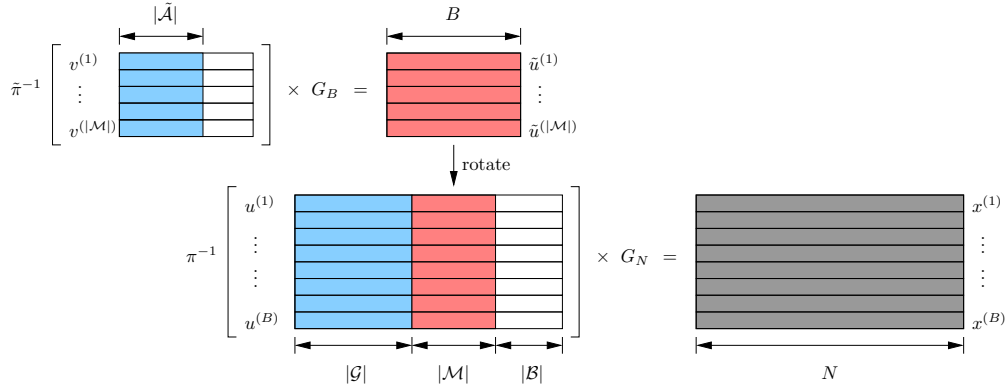}
 \caption{{\bf Illustration of polar encoder for fading channel with two states.} In this numerical example, we assume $N=16$, $B=8$, $|\mathcal{A}|=5$, $|\mathcal{G}|=7$, $|\mathcal{M}|=5$, and $|\mathcal{B}|=4$. Bits in blue are information bits, and ones in white are frozen to zeros. After encoding of Phase 1, the codewords are rotated and embed to the messages of Phase 2 to generate the finalized codeword.
}
\label{fig:Encoder}
\end{figure*}
Based on this observation, when polarizing two $W_1$ and $W_2$ with transition probabilities defined by (\ref{fun:p_1}) and (\ref{fun:p_2}) respectively, the indices after permutation $\pi$ can be divided into three categories (illustrated in Fig.~\ref{fig:Polarization}). Without loss of generality, we assume $p_1\geq p_2$.
\begin{enumerate}
\item $\mathcal{G}$: both channels are good, i.e.
$$I(W_{1,N}^{(\pi(i))})\to 1,\quad I(W_{2,N}^{(\pi(i))})\to 1.$$
\item $\mathcal{M}$: only channel 2 is good, while channel 1 is bad, i.e.
$$I(W_{1,N}^{(\pi(i))})\to 0,\quad I(W_{2,N}^{(\pi(i))})\to 1.$$
\item $\mathcal{B}$: both channels are bad, i.e.
$$I(W_{1,N}^{(\pi(i))})\to 0,\quad I(W_{2,N}^{(\pi(i))})\to 0.$$
\end{enumerate}

Denote the information sets for two channels as $\mathcal{A}_1$ and $\mathcal{A}_2$ correspondingly, then obviously $\mathcal{A}_1=\mathcal{G}$, and $\mathcal{A}_2=\mathcal{G}\cup \mathcal{M}$. Moreover, we have:
\begin{align}
&|\mathcal{G}|=|\mathcal{A}_1|=N[1-H(p_1)-\epsilon],\label{fun:size_G}\\
&|\mathcal{M}|=|\mathcal{A}_2|-|\mathcal{A}_1|=N[H(p_1)-H(p_2)],\label{fun:size_M}\\
&|\mathcal{B}|=N-|\mathcal{A}_2|=N[H(p_2)+\epsilon],\label{fun:size_B}
\end{align}
where $\epsilon$ is a arbitrary small positive number.

For the fading channel, we consider the transmitter has no prior knowledge of channel states before transmitting, hence, coding over channels with indices in $\mathcal{M}$ is challenging. Observe that for those channels, with probability $q_2$ they are nearly noiseless, and with probability $q_1$ they are purely noisy. To this end, each channel can be modeled as a binary erasure channel (BEC) from the viewpoint of blocks, and we denote this channel as $\tilde{W}$. This intuition inspires our design of encoder and decoder for fading channels.

\subsection{Encoder}

The encoding process of polar coding for fading channel has two phases, hierarchically using polar codes to generate $NB$-length codewords, where $N$ is blocklength and $B$ is the number of blocks.
\subsubsection{Phase 1} Consider a set of $B$-length block messages $v^{(k)}$ with $k\in\{1,\ldots,|\mathcal{M}|\}$. For every $v^{(k)}$, construct polar code $\tilde{u}^{(k)}$, which is $G_B$-coset code with parameter $(B,|\tilde{\mathcal{A}}|,\tilde{\mathcal{A}},0)$, where $\tilde{\mathcal{A}}$ is the information set for $\tilde{W}\triangleq \text{BEC}(q_1)$, and we choose
\begin{equation}
|\tilde{\mathcal{A}}|=(1-q_1-\epsilon)B.\label{fun:size_A}
\end{equation}
In other words, we construct a set of polar codes, where each code corresponds to an index in set $\mathcal{M}$, with the same rate $1-q_1-\epsilon$, the same information set $\tilde{\mathcal{A}}$, and the same frozen values 0 as well. Mathematically, if denote the reordering permutation for $\tilde{W}$ as $\tilde{\pi}$, then
    \begin{align}
    &\tilde{\pi}(v^{(k)})=[v^{(k)}_1,\ldots,v^{(k)}_{|\tilde{\mathcal{A}}|},0,\ldots,0],\label{fun:code_phase1}\\
    &\tilde{u}^{(k)}=v^{(k)}G_B.\label{fun:encoding_phase1}
    \end{align}
\subsubsection{Phase 2} Consider another set of $N$-length messages $u^{(b)}$ with $l\in\{1,\ldots, B\}$. For every $u^{(b)}$, construct polar code $x^{(b)}$, which is $G_N$-coset codes with parameter $(N,|\mathcal{G}|,\mathcal{G},u^{(b)}_{\mathcal{G}^c}))$, where $\mathcal{G}$ is BSC information set with size given by (\ref{fun:size_G}). Remarkably, we do not froze all non-information bits to be 0, but embed the blockwise codewords from Phase 1. More precisely, if denote the permutation operator of BSC as $\pi$, then
    \begin{align}
    &\pi(u^{(b)})=[u^{(b)}_1,\ldots,u^{(b)}_{|\mathcal{G}|},\tilde{u}^{(1)}_{b},\ldots,\tilde{u}^{(|\mathcal{M}|)}_{b},0,\ldots,0],\label{fun:code_phase2}\\
    &x^{(b)}=u^{(b)}G_N.\label{fun:encoding_phase2}
    \end{align}
By collecting all $\{x^{(b)}\}_{1:B}$ together, the encoder generates and outputs a codeword with length $NB$. An example to illustrate this encoding process is shown in Fig.~\ref{fig:Encoder}.

\subsection{Decoder}
\begin{figure*}[t]
 \centering
 \includegraphics[scale=0.7]{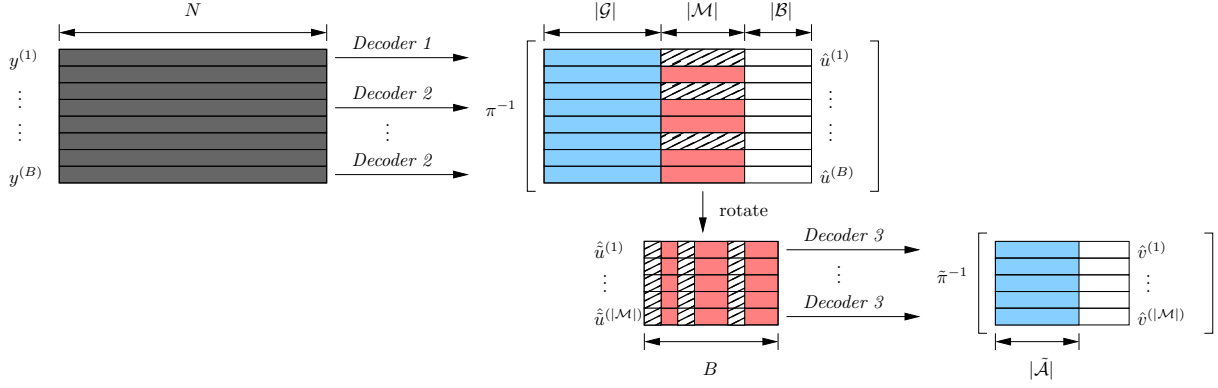}
 \caption{{\bf Illustration of polar decoder for fading channel with two states.} We use the same parameters as encoder. After Phase 1, decoder outputs all estimates $\{\hat{u}^{(l)}\}_{1:B}$ using BSC SC decoder based on channel states, then selected columns of decoded results are rotated and delivered as inputs to Phase 2. In the next phase, the decoder basically uses BEC SC decoder to decode $\hat{v}^{(k)}$ from $\hat{\tilde{u}}^{(k)}$ for every $k\in\{1,\ldots,|\mathcal{M}|\}$. Bits in shade represent for erasures.
}
\label{fig:Decoder}
\end{figure*}

After receiving the sequence $y_{1:{NB}}$ from channel, the decoder's task is trying to make estimates $\{\hat{v}^{(k)}\}_{1:{|\mathcal{M}|}}$ and $\{\hat{u}^{(b)}\}_{1:{B}}$, such that the information bits in both sets of messages match the ones at the transmitter end with high probability. Rewrite channel output $y_{1:{NB}}$ as a $B\times N$ matrix, with row vectors $\{y^{(b)}\}_{1:B}$. As that of the encoding process, the decoding process also has two phases:
\subsubsection{Phase 1} For every $b\in\{1,\ldots,B\}$, decode $\hat{u}^{(b)}$ from $y^{(b)}$ using SC decoder. More precisely,
because at the receiver end, channel state is available, then receiver can adopt the corresponding SC decoder for BSC based on the channel state observed. Remarkably, for index in $\mathcal{M}$, we decode as erasure, denoted as ``e'', for bad channel state. To this end, polar decoder is given by: if the channel state is $h_1$, then use \emph{Decoder 1}, otherwise use \emph{Decoder 2}, where the two decoders are expressed follows: \\
$-$ \emph{Decoder 1:}
    \begin{align}
\hat{u}^{(b)}_i \triangleq \left\{
\begin{array}{ll}
  0, & \text{if }b\in\mathcal{B},\\
  \text{e}, &\text{if }b\in\mathcal{M}, \\
  d_{1,i}(y^{(b)},\hat{u}^{(b)}_{1:i-1}), & \text{if }b\in \mathcal{G},
\end{array}
\right.\nonumber
\end{align}
\indent in the order $i$ from $1$ to $N$, where
\begin{align}
d_{1,i}(y^{(b)},\hat{u}^{(b)}_{1:i-1})\triangleq \left\{
\begin{array}{ll}
  0, & \text{if }\frac{W_{1,N}^{(i)}(y^{(b)},\hat{u}^{(b)}_{1:i-1}|0)}{W_{1,N}^{(i)}(y^{(b)},\hat{u}^{(b)}_{1:i-1}|1)}\geq 1, \\
  1,& \text{otherwise.}
\end{array}
\right.\nonumber
\end{align}
$-$ \emph{Decoder 2:}
    \begin{align}
\hat{u}^{(b)}_i \triangleq \left\{
\begin{array}{ll}
  0, & \text{if }b\in\mathcal{B}, \\
  d_{2,i}(y^{(b)},\hat{u}^{(b)}_{1:i-1}), & \text{if }b\in \mathcal{G}\cup\mathcal{M},
\end{array}
\right.\nonumber
\end{align}
\indent in the order $i$ from $1$ to $N$, where
\begin{align}
d_{2,i}(y^{(b)},\hat{u}^{(b)}_{1:i-1})\triangleq \left\{
\begin{array}{ll}
  0, & \text{if }\frac{W_{2,N}^{(i)}(y^{(b)},\hat{u}^{(b)}_{1:i-1}|0)}{W_{2,N}^{(i)}(y^{(b)},\hat{u}^{(b)}_{1:i-1}|1)}\geq 1, \\
  1,& \text{otherwise.}
\end{array}
\right.\nonumber
\end{align}
After decoding from $y^{(b)}$ block by block, the decoder output a $B\times N$ matrix $\hat{\bold{U}}$ with rows $\{\hat{u}^{(b)}\}_{1:B}$.
\subsubsection{Phase 2} Select columns of $\hat{\bold{U}}$ with indices in $\mathcal{M}$ after permutation $\pi$ to construct
a $B\times |\mathcal{M}|$ matrix $\hat{\tilde{\bold{U}}}$. Consider each column of $\hat{\tilde{\bold{U}}}$, denoted by $\hat{\tilde{u}}^{(k)}$ for $k\in\{1,\ldots,|\mathcal{M}|\}$, as the input to decoder in Phase 2. Then receiver aims to decode $\hat{v}^{(k)}$ from $\hat{\tilde{u}}^{(k)}$ using SC decoder with respect to $\tilde{W}=$BEC$(q_1)$. More formally, the decoder in Phase 2 is expressed as follow:\\
$-$ \emph{Decoder 3:}
    \begin{align}
\hat{v}^{(k)}_j \triangleq \left\{
\begin{array}{ll}
  0, & \text{if }k\in\tilde{\mathcal{A}}^c, \\
  \tilde{d}_j(\hat{\tilde{u}}^{(k)},\hat{v}^{(k)}_{1:j-1}), & \text{if }k\in \tilde{\mathcal{A}},
\end{array}
\right.\nonumber
\end{align}
\indent in the order $j$ from $1$ to $B$, where
\begin{align}
\tilde{d}_j(\hat{\tilde{u}}^{(k)},\hat{v}^{(k)}_{1:j-1})\triangleq \left\{
\begin{array}{cl}
  0, & \text{if }\frac{\tilde{W}_{N}^{(j)}(\hat{\tilde{u}}^{(k)},\hat{v}^{(k)}_{1:j-1}|0)}{ \tilde{W}_{N}^{(j)}(\hat{\tilde{u}}^{(k)},\hat{v}^{(k)}_{1:j-1}|1)}\geq 1, \\
  1,& \text{otherwise.}
\end{array}
\right.\nonumber
\end{align}
After Phase 2, the decoder output a $|\mathcal{M}|\times B$ matrix $\hat{\bold{V}}$ with rows $\{\hat{v}^{(k)}\}_{1:|\mathcal{M}|}$. An example to illustrate the decoding of both phases is shown in Fig.~\ref{fig:Decoder}.

\subsection{Achievable Rate}

We want to show the rate in proposed polar coding scheme achieves the capacity of converted fading channel given by (\ref{fun:fading_capacity}). Intuitively, by using BSC SC decoders corresponding to channel states, the output from Phase 1 successfully recovers all information bits in $\{u^{(b)}\}_{1:B}$. Moreover, for those with indices corresponding to $\mathcal{M}$, the decoder could decode correctly if channel state is $h_2$, and set to erasures otherwise. Thus, the input to decoding Phase 2, vector $\hat{\tilde{u}}^{(k)}$ can be considered as an output of BEC$(q_1)$, hence BEC SC decoder could decode all information bits in $v^{(k)}$ correctly for any $k\in\{1,\ldots,|\mathcal{M}|\}$.

More formally, we have the following theorem.
\begin{theorem}
The proposed polar coding scheme achieves any rate $R<C_{\text{SI-D}}$ with arbitrarily small error probability for sufficiently large $N$ and $B$.
\end{theorem}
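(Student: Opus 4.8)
The plan is to prove the claim in two essentially independent parts — a rate computation and an end-to-end error-probability computation — and then combine them: for any target rate below capacity one fixes $\epsilon$ small enough and takes $N,B$ large. For the rate, observe that Phase 2 places $|\mathcal{G}|$ information bits in each of the $B$ blocks while Phase 1 places $|\tilde{\mathcal{A}}|=(1-q_1-\epsilon)B$ information bits in each of the $|\mathcal{M}|$ blockwise codewords, so that
\begin{equation}
R=\frac{B|\mathcal{G}|+|\mathcal{M}|(1-q_1-\epsilon)B}{NB}=\frac{|\mathcal{G}|}{N}+\frac{|\mathcal{M}|}{N}(1-q_1-\epsilon).\nonumber
\end{equation}
Substituting (\ref{fun:size_G}) and (\ref{fun:size_M}) and using $1-q_1=q_2$, I would simplify the right-hand side to $[1-q_1H(p_1)-q_2H(p_2)]-O(\epsilon)=C_{\text{SI-D}}-O(\epsilon)$; since $\epsilon>0$ is arbitrary, every $R<C_{\text{SI-D}}$ is attainable.

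For the error probability I would bound $P_e\le \Prob\{\text{Phase 1 fails}\}+\Prob\{\text{Phase 2 fails}\mid\text{Phase 1 succeeds}\}$. In Phase 1, a block in the good state $h_2$ is decoded with information set $\mathcal{A}_2=\mathcal{G}\cup\mathcal{M}$ (rate $1-H(p_2)-\epsilon<I(W_2)$) and a block in the bad state $h_1$ with information set $\mathcal{A}_1=\mathcal{G}$ (rate $1-H(p_1)-\epsilon<I(W_1)$); by the standard polar guarantee each length-$N$ block fails with probability $O(2^{-N^{\beta}})$, so a union bound over the $B$ blocks yields $O(B\,2^{-N^{\beta}})$. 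Conditioned on Phase 1 succeeding, the $k$-th column $\hat{\tilde u}^{(k)}$ carries the true symbol $\tilde u^{(k)}_b$ on every $h_2$-block and an erasure on every $h_1$-block; because the fading states are i.i.d. with $\Prob\{H=h_1\}=q_1$, this column is exactly $\tilde u^{(k)}=v^{(k)}G_B$ observed through $\tilde W=\text{BEC}(q_1)$. As its rate $1-q_1-\epsilon$ is below $I(\tilde W)=1-q_1$, \emph{Decoder 3} recovers $v^{(k)}$ with probability $O(2^{-B^{\beta}})$, and a union bound over the $|\mathcal{M}|\le N$ columns gives $O(N\,2^{-B^{\beta}})$. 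Hence $P_e=O(B\,2^{-N^{\beta}})+O(N\,2^{-B^{\beta}})$, which vanishes for $N,B$ growing together (for instance $B=N$).

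The step I expect to be the main obstacle is the correctness of \emph{Decoder 1}, the bad-state decoder: it must recover the $\mathcal{G}$ positions even though the intervening $\mathcal{M}$ positions are \emph{frozen to unknown, generally nonzero} values $\tilde u^{(k)}_b$ that are declared erasures inside the successive-cancellation recursion rather than supplied as genie-known frozen bits. A pass that instead treated them as information would be hopeless, since $Z(W_{1,N}^{(i)})\to 1$ for $i\in\mathcal{M}$, and a pass that guessed them could propagate errors forward into later $\mathcal{G}$ positions. To close this gap I would appeal to the polarization limit: the nesting $\mathcal{A}_1\subseteq\mathcal{A}_2$ of (\ref{fun:channel_compare}), together with the degradedness of $W_1$ relative to $W_2$ (uniform polarization, \cite{Korada:Thesis09}), makes the $\mathcal{M}$ coordinate channels asymptotically pure noise under $W_1$, so that marginalizing over them leaves the $\mathcal{G}$ coordinate channels asymptotically noiseless. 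Quantitatively, I would bound the Bhattacharyya parameters of the erasure-augmented coordinate channels seen by \emph{Decoder 1} and show their sum over $\mathcal{G}$ still decays as $O(2^{-N^{\beta}})$, which simultaneously certifies that the $\mathcal{G}$ bits are reliable and that the $\mathcal{M}$ declarations are genuine BEC$(q_1)$ erasures feeding Phase 2. This decoupling argument, and especially its finite-$N$ quantitative form, is the technical heart of the proof.
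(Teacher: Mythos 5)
Your first two paragraphs are, modulo presentation, exactly the paper's proof: the paper asserts the per-block bound (\ref{fun:proof_error_bound1}) and the per-column bound (\ref{fun:proof_error_bound2}), applies a union bound over the $B$ blocks and the $|\mathcal{M}|\leq N$ columns, and does the same rate bookkeeping from (\ref{fun:size_G}), (\ref{fun:size_M}), and (\ref{fun:size_A}). Where you go beyond the paper is your last paragraph, and you are right that this is where the real content lies: the bound (\ref{fun:proof_error_bound1}) is the genie-aided SC guarantee, valid only when every position outside the information set is \emph{known} to the decoder at the time it is needed, whereas under state $h_1$ the positions in $\mathcal{M}$ carry unknown Phase-1 codeword bits $\tilde{u}^{(k)}_b$, and in the natural SC order $i=1,\ldots,N$ these interleave with $\mathcal{G}$ ($\pi$ is only a bookkeeping reordering; the paper itself notes that information indices are not adjacent). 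The paper's proof silently assumes this issue away, so flagging it is a genuine improvement in rigor.

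However, the repair you sketch does not go through as stated. Marginalizing over an unknown earlier input strictly degrades the later coordinate channels, and the degradation need not vanish under polarization: already for $N=2$ over a BEC with erasure probability $\varepsilon$, the second coordinate channel has erasure probability $\varepsilon^2$ when $u_1$ is known, but erasure probability $\varepsilon$ when $u_1$ is unknown and uniform, since $y_1$ then carries no information about $u_2$. The nesting (\ref{fun:channel_compare}) says nothing about whether a $\mathcal{G}$ index relies on knowledge of earlier $\mathcal{M}$ bits, so there is no reason the ``erasure-augmented'' Bhattacharyya sums over $\mathcal{G}$ keep decaying as $O(2^{-N^{\beta}})$; the decoupling you hope for is generally false at finite $N$ and is not implied by degradedness. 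A fix that does work keeps your BEC picture but changes the decoding schedule: run the $B$ within-block SC decoders in lockstep over $i$; when $i$ is the $k$-th element of $\mathcal{M}$, form the column from the $h_2$-blocks' bit-$i$ decisions (erasures at $h_1$-blocks), run \emph{Decoder 3} immediately to recover $v^{(k)}$, re-encode $\tilde{u}^{(k)}=v^{(k)}G_B$, and feed $\tilde{u}^{(k)}_b$ back into every $h_1$-block as a now-known past value before any later $\mathcal{G}$ decision is made. On the event that no earlier decision failed, every bit decision is then the standard genie-aided one; a first-error union bound over all block bit-channels and all $|\mathcal{M}|$ column codes yields $O(B2^{-N^{\beta}})+O(N2^{-B^{\beta}})$, and your rate computation then completes the theorem.
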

\begin{proof}
The proof is straightforward by utilizing error bound from polar coding. In Phase 1 of decoding, the error probability of recovering $u^{(b)}$ correctly for each $b\in\{1,\ldots,B\}$ is given by
\begin{equation}
P_{1,e}^{(b)}=O(2^{-N^{\beta}}),\label{fun:proof_error_bound1}
\end{equation}
where $\beta<1/2$. Similarly, in decoding Phase 2, the error probability of recovering $v^{(k)}$ correctly for each $k\in\{1,\ldots,M\}$ is given by
\begin{equation}
P_{2,e}^{(k)}=O(2^{-B^{\beta}}).\label{fun:proof_error_bound2}
\end{equation}
Hence, by union bound, the total decoding error probability is upper bounded by
\begin{align}
P_{e}&\leq\sum_{b=1}^{B}P_{1,e}^{(b)}+\sum_{k=1}^{|\mathcal{M}|}P_{2,e}^{(k)}\nonumber\\
        &=O(B2^{-N^{\beta}})+O(N2^{-B^{\beta}})\nonumber\\
        &\to 0,\nonumber
\end{align}
when $N$ and $B$ tend to infinity. In particular, we consider $B=o(2^{N^{\beta}})$ and $N=o(2^{B^{\beta}})$.

Moreover, from the analysis, it is evident that all messages bits in $v^{(k)}$ and $u^{(b)}$ are decodable, then the achievable rate for the designed scheme is given by
\begin{align}
R   &=\frac{1}{NB}\Big\{|\mathcal{M}||\tilde{\mathcal{A}}|+B|\mathcal{G}|\Big\}\nonumber\\
    &=\frac{1}{NB}\Big\{N[H(p_1)-H(p_2)]B[1-q_1-\epsilon]\nonumber\\
    &\quad\quad +BN[1-H(p_1)-\epsilon]\Big\}\nonumber\\
    &=q_1[1-H(p_1)]+q_2[1-H(p_2)]-\delta(\epsilon)\nonumber\\
    &=C_{\text{SI-D}}-\delta(\epsilon),\nonumber
\end{align}
where we have used (\ref{fun:size_G}), (\ref{fun:size_M}) and (\ref{fun:size_A}), and
\begin{align}
\delta(\epsilon)\triangleq\epsilon[1+H(p_1)-H(p_2)]\to 0, \text{ as }\epsilon\to 0. \nonumber
\end{align}
Thus, any rate $R<C_{\text{SI-D}}$ is achievable.
\end{proof}
\subsection{Complexity Analysis}

As we have seen, polar coding schemes for both BSC and BEC have relatively low complexity. Since the proposed polar coding scheme for fading channel hierarchically utilizes polar codes, the character of low complexity is consequently inherited. More precisely, $|\mathcal{M}|$ number of $B$-length polar codes as well as $B$ number of $N$-length polar codes are utilized. Hence, the overall complexity of the coding scheme, for both encoding and decoding, is given by
\begin{equation}
|\mathcal{M}|\cdot O(B\log B)+B\cdot O(N\log N)=O(NB\log (NB)).\nonumber
\end{equation}

\section{Discussion}

In this section, we generalize the polar coding scheme to fading channels with arbitrary finite number of states. Assume channel gain $H_{b,i}$ has $S$ states from set $\{h_1,\ldots,h_S\}$, where $S$ is a positive integer. Assume the distribution of $H_{b,i}$ omitting indices is given by  $\text{Pr}\{H=h_s\}\triangleq q_s$, where $s\in\{1,\ldots,S\}$. Then the converted channel using BPSK, defined in (\ref{fun:converted_channel}), is still a BSC, whereas with probability $q_s$, the transition probability is given by
\begin{equation}
\text{Pr}\{\bar{Z}=1\}=1-\Phi(h_s\sqrt{\text{SNR}})\triangleq p_s.
\end{equation}
Denote the converted BSC corresponding to state $h_s$ as $W_s$, then the capacity of converted channel is given by
\begin{equation}
C_{\text{SI-D}}=\sum_{s=1}^S q_s [1-H(p_s)],
\end{equation}
where $1-H(p_s)$ is the capacity of $W_s$.

Observe that when polarizing $S$ BSCs with different transition probabilities, the indices could be divided into $S+1$ sets after permutation $\pi$. More mixture sets $\mathcal{M}_1$, \ldots, $\mathcal{M}_{S-1}$ are defined in this case. Without loss of generality, we assume $p_1\geq p_2\geq\cdots\geq p_S$. Then $|\mathcal{M}_s|=H(p_s)-H(p_{s+1})$, and for index in set $\mathcal{M}_s$,  $W_1,\ldots,W_s$ are polarized to be purely noisy and all others to be noiseless. To this end, we consider a BEC with erasure probability $e_s=\sum_{t=1}^s q_t$ to characterize the polarization result for index in $\mathcal{M}$. (See Fig.~\ref{fig:Polarization_S} for an intuition.)
\begin{figure}[h]
 \centering
 \includegraphics[width=0.9\columnwidth]{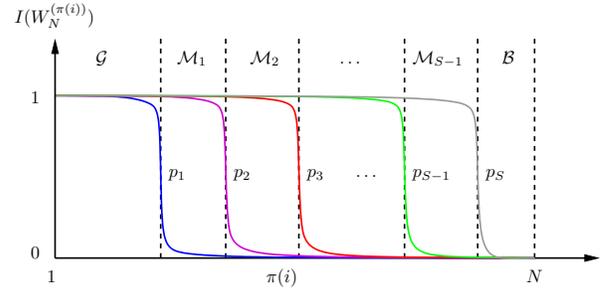}
 \caption{{\bf Illustration of polarizations for $S$ BSCs.} There are $S-1$ mixture sets, denoted as $\mathcal{M}_1,\ldots,\mathcal{M}_{S-1}$.
}
\label{fig:Polarization_S}
\end{figure}

Polar coding scheme designed for this channel is similar. In Phase 1 of encoding, transmitter needs to generate $S-1$ sets of polar codes, where each one is $G_B$-coset codes with parameter $(B,|\tilde{\mathcal{A}}_s|,\tilde{\mathcal{A}}_s,0)$ with respect to BEC$(e_s)$, and all the encoded codewords are embed into messages for Phase 2. At the receiver end, Phase 1 should use one of $S$ SC decoders for BSC to decode $\hat{u}_{(l)}$, based on observation of channel states. Then in Phase 2, $S-1$ BEC SC decoders are working in parallel to recover the information bits. By adopting this polar coding scheme, the achievable rate is given by
\begin{align}
R   &=\frac{1}{NB}\Bigg\{B|\mathcal{G}|+\sum_{s=1}^{S-1}|\mathcal{M}_s||\tilde{\mathcal{A}}_s|\Bigg\}\nonumber\\
    &=[1-H(p_1)-\epsilon]+\sum_{s=1}^{S-1}[H(p_s)-H(p_{s+1})](1-e_s-\epsilon)\nonumber\\
    &=\sum_{s=1}^Sq_s[1-H(p_s)]-\delta'(\epsilon),\nonumber
\end{align}
where $\delta'(\epsilon)=\epsilon[1+H(p_1)-H(p_S)]$. Thus, the proposed polar coding scheme achieves the capacity of channel, and the encoding and decoding complexities are both given by
\begin{equation}
\sum_{s=1}^{S-1}|\mathcal{M}_s|\cdot O(B\log B)+B\cdot O(N\log N)=O(NB\log (NB)),\nonumber
\end{equation}
which is irrelative to the value of $S$, as $\sum\limits_{s=1}^{S}|\mathcal{M}_s|\leq N$. 
%
%



\begin{thebibliography}{1}
\providecommand{\url}[1]{#1}
\csname url@samestyle\endcsname
\providecommand{\newblock}{\relax}
\providecommand{\bibinfo}[2]{#2}
\providecommand{\BIBentrySTDinterwordspacing}{\spaceskip=0pt\relax}
\providecommand{\BIBentryALTinterwordstretchfactor}{4}
\providecommand{\BIBentryALTinterwordspacing}{\spaceskip=\fontdimen2\font plus
\BIBentryALTinterwordstretchfactor\fontdimen3\font minus
  \fontdimen4\font\relax}
\providecommand{\BIBforeignlanguage}[2]{{%
\expandafter\ifx\csname l@#1\endcsname\relax
\typeout{** WARNING: IEEEtran.bst: No hyphenation pattern has been}%
\typeout{** loaded for the language `#1'. Using the pattern for}%
\typeout{** the default language instead.}%
\else
\language=\csname l@#1\endcsname
\fi
#2}}
\providecommand{\BIBdecl}{\relax}
\BIBdecl

\bibitem{Arikan:Channel08}
E.~Ar{\i}kan, ``{C}hannel polarization: {A} {M}ethod for constructing capacity-achieving codes for symmetric binary-input memoryless channels,'' \emph{IEEE Trans. on Inf. Theory}, vol.~55, no.~7, pp. 3051--3073, Jul. 2009.
\bibitem{Arikan:Error09}
E.~Ar{\i}kan, and E.~Telatar, ``On the rate of channel polarization,'' in \emph{Proc. 2009 {IEEE} International Symposium on Information Theory
 ({ISIT} 2009)}, Seoul, Korea, Jun. 2009.

\bibitem{Sasoglu:Polarization09}
E.~Sasoglu, E.~Ar{\i}kan, and E.~Telatar, ``{P}olarization for arbitrary
  discrete memoryless channels,'' in \emph{Proc. 2009 {IEEE} Information Theory
  Workshop ({ITW} 2009)}, Taormina, Sicily, Italy, Oct. 2009.

\bibitem{Arikan:Source10}
E.~Ar{\i}kan, ``Source polarization,''in \emph{Proc. 2010 {IEEE} International Symposium on Information Theory
 ({ISIT} 2010)}, Austin, Texas, U.S.A., Jun. 2010.

\bibitem{Korada:Source10}
S.~B.~Korada and R.~L.~Urbanke, ``Polar codes are optimal for lossy source coding,'' \emph{IEEE Trans. on Inf. Theory}, vol.~56, no.~4, pp. 1751--1768, Apr. 2010.

\bibitem{Mohammad:Source10}
M.~Karzand, and E.~Telatar, ``Polar Codes for $Q$-ary Source Coding,'' in \emph{Proc. 2010 {IEEE} International Symposium on Information Theory
 ({ISIT} 2010)}, Austin, Texas, U.S.A., Jun. 2010.

\bibitem{Ozan:Expansion12}
O.~O.~Koyluoglu, K.~Appaiah, H.~Si, and S.~Vishwanath, ``Expansion coding: Achieving the capacity of an AEN channel,'' in \emph{Proc. 2012 {IEEE} International Symposium on Information Theory
 ({ISIT} 2012)}, Boston, Massachusetts, U.S.A., Jul. 2012.

\bibitem{Abbe:AWGN11}
E.~Abbe, and A.~Barron, ``Polar coding schemes for the AWGN channel,'' in \emph{Proc. 2011 {IEEE} International Symposium on Information Theory
 ({ISIT} 2011)}, Saint Petersburg, Russia, Jul. 2011.

\bibitem{Abbe:MAC12}
E.~Abbe, and E.~Telatar, ``Polar codes for the $m$-user multiple access channel,'' \emph{IEEE Trans. on Inf. Theory}, vol.~58, no.~8, pp. 5437--5448, Aug. 2012.

\bibitem{Arikan:Polar11}
E.~Ar{\i}kan, ``Systematic polar coding,'' \emph{IEEE Communications Letters}, vol.~15, no.~8, pp. 860--862, Aug. 2011.

\bibitem{Tse:Polar12}
B.~Li, H.~Shen, and D.~Tse, ``An adaptive successive cancellation list decoder for polar codes with cyclic redundancy check'', arXiv:1208.3091.

\bibitem{Tse:Wireless05}
D.~Tse, and P.~Viswanath, \emph{Fundamentals of Wireless Communication}. \hskip 1em plus 0.5em minus 0.4em\relax Cambridge University Press, 2005.

\bibitem{Korada:Thesis09}
S.~B.~Korada, ''Polar codes for channel and source coding,'' \emph{PhD Thesis}, EPFL, 2009. 

\end{thebibliography}
\end{document}